\begin{document}
\title{Two-Bar Charts Packing Problem\thanks{The work is supported by Mathematical Center in Akademgorodok under agreement No 075-2019-1613 with the Ministry of Science and Higher Education of the Russian Federation.}}
\author{Adil Erzin\orcidID{0000-0002-2183-523X} \and Gregory Melidi \and Stepan
Nazarenko \and Roman Plotnikov\orcidID{0000-0003-2038-5609}}
\authorrunning{A. Erzin et al.}
\institute{Sobolev Institute of Mathematics, SB RAS, Novosibirsk 630090, Russia\\
\email{adilerzin@math.nsc.ru}}

\maketitle              % typeset the header of the contribution
\begin{abstract}
We consider a Bar Charts Packing Problem (BCPP), in which it is necessary to pack bar charts (BCs) in a strip of minimum length. The problem is, on the one hand, a generalization of the Bin Packing Problem (BPP), and, on the other hand, a particular case of the Project Scheduling Problem with multidisciplinary jobs and one limited non-accumulative resource. Earlier, we proposed a polynomial algorithm that constructs the optimal package for a given order of non-increasing BCs. This result generalizes a similar result for BPP. For Two-Bar Charts Packing Problem (2-BCPP), when each BC consists of two bars, the algorithm we have proposed constructs a package in polynomial time, the length of which does not exceed $2\ OPT+1$, where $OPT$ is the minimum possible length of the packing. As far as we know, this is the first guaranteed estimate for 2-BCPP. We also conducted a numerical experiment in which we compared the solutions built by our approximate algorithms with the optimal solutions built by the CPLEX package. The experimental results confirmed the high efficiency of the developed algorithms.

\keywords{Bar Chart \and Strip Packing \and APX.}

\end{abstract}

\section{Introduction}
When solving the problem of optimizing the investment portfolio in the oil and gas sector, we faced the following problem \cite{Erzin20}. Suppose that the territory of the oil and gas field divided into clusters. For each cluster, a set of projects for its development is known. The project is characterized, in particular, by annual oil production. If we know the start year of the project, then we know the volumes of oil production in the first and all subsequent years of the project. The production schedule for each project can be represented in the form of a bar chart (BC), in which the bar's height corresponds to the volume of production in the corresponding year. It is required to determine the year of the launch of each project in such a way that the execution time of all projects is minimal, and the annual production volume from all cluster deposits does not exceed a predetermined value $D$, which is defined, for example, by the pipeline capacity.

Imagine a horizontal strip of height $D$. Then, the problem described above comes down to finding packing of BCs in a part of the strip (a rectangle of height $D$) of minimum length. Moreover, when packing each BC, the bars corresponding to production volumes in different years can move vertically, but they are inextricably horizontal and cannot be rearranged. Fig. 1 shows an example of feasible packing of three BCs, from which it follows that projects (a) and (b) start in the first year, project (c) starts in the fourth year, and all projects end in year 5. That is, the length of the strip (rectangle) into which all BCs are packed is 5.

\begin{figure}[t]
\includegraphics[width=\textwidth]{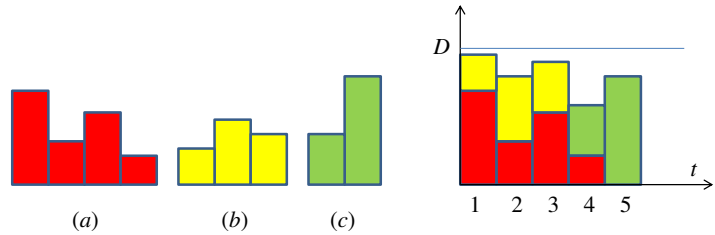}
\caption{Feasible packing of BCs.} \label{fig1}
\end{figure}

We were unable to find any publications on BCs packing. Similar problems that were studied reasonably well are the Bin Packing Problem (BPP) \cite{Baker85,Dosa07,Johnson73,Johnson85,Li97,Yue91,Yue95} and the problem of packing rectangles in a strip \cite{Baker80,Coffman80,Harren09,Harren14,Schiermeyer94,Steinberg97}.

In the classical BPP, a set $L$ of items, a size of each items, and a set of identical $D$-size containers (bins) are specified. All items must be placed in a minimum number of bins. One of the well-known algorithms for packing items in containers is First Fit Decreasing (FFD). As part of this algorithm, objects are numbered in non-increasing order. Then, all items are scanned in order, and the current item is placed in the first suitable bin.  In 1973, Johnson proved that the FFD algorithm uses no more than $11/9\ OPT(L)+4$ containers \cite{Johnson73}. In 1985 Backer showed that the additive constant can be reduced to 3 \cite{Baker85}. Yue in 1991 proved that $FFD(L)\leq 11/9\ OPT (L)+1$ \cite{Yue91}. Furthermore, in 1997 he improved the result to $FFD(L)\leq 11/9\ OPT (L)+7/9$ together with Li \cite{Li97}. In 2007, D\'{o}sa found the tight boundary of the additive constant and gave an example when $FFD(L)=11/9\ OPT(L)+6/9$ \cite{Dosa07}. A Modified First Fit Decreasing (MFFD) algorithm improves FFD by dividing items into groups by size and packing items from different groups separately. Johnson and Garey proposed this modification, and in 1985 they showed that $MFFD(L)\leq 71/60\ OPT(L)+31/6$ \cite{Johnson85}. Subsequently, the result was improved by Yue and Zhang to $MFFD(L)\leq 71/60\ OPT(L)+1$ \cite{Yue95}.

The BPP is a particular case of the problem of packing rectangles in a horizontal strip when all objects have the same width. Therefore, the problem of packing rectangles in a strip is NP-hard too. Moreover, if $P\neq NP$, then both problems are  3/2-inapproximable \cite{Vazirani01}. Formally, the problem of tight packing of rectangles into a semi-infinite strip of height $D$ is as follows. For each rectangle $i\in L$, we know the width $w_i$ and the height $h_i$. It is required to find the packing of the set of rectangles $L$ in a strip of minimum length. Rotation of rectangles is prohibited. The Bottom-Left algorithm proposed by Baker \cite{Baker80} arranges rectangles in descending order of height and has a ratio of 3. Coffman et al. \cite{Coffman80} in 1980 proposed algorithms with ratio 3 and 2.7. Sleator \cite{Sleator80} showed that his algorithm packs the rectangles into a strip whose length does not exceed $2\ OPT(L)+w_{max}(L)/2$, where $w_{max} (L)$ is the width of the widest rectangle in the set. Since $w_{max} (L) \leq OPT(L)$, the algorithm guarantees a ratio of 2.5. This ratio was reduced by Schiermeyer \cite{Schiermeyer94} and Steinberg \cite{Steinberg97} to 2. Harren and van Stee \cite{Harren09} were the first who evaluated a ratio of less than 2. Their proposed algorithm has a ratio of 1.9396. The smallest estimate for the ratio known to date obtained by Harren et al. \cite{Harren14} in 2014, and equals $(5/3+\varepsilon)OPT(L)$, for any $\varepsilon > 0$.

The problem under consideration is also a particular case of the project scheduling problem. Realy, each project consists of a sequence of jobs that needs to be done one after another without delay (no-wait). Each job has a unit duration and consumes the non-accumulative resource. It is required to find the start moment for each project in such a way that all projects are finished in minimum time, consuming together at most $D$ resource during each moment.

Resource-limited scheduling has been the subject of many publications discussing renewable and nonrenewable or accumulative resources. An overview of the results can be found, for example, in \cite{Hartmann02,Kolisch06}. For the case of an accumulative resource, exact and asymptotically exact algorithms have been developed \cite{Gimadi03,Gimadi19}. In the case of a limited renewable resource, the scheduling problem is NP-hard, and polynomial algorithms with guaranteed accuracy estimates are not known for it. As a rule, heuristic algorithms are developed for its approximate solution, and a posteriori analysis is performed \cite{Goncharov12,Goncharov14,Hartmann02,Kolisch06}. For example, in \cite{Gimadi19,Goncharov12,Goncharov14,Goncharov17,Goncharov19} multidisciplinary partially ordered jobs of arbitrary duration are considered that consume different amounts of a homogeneous resource at different time moments, and the authors developed approximate algorithms for solving the problem, and also performed a posteriori analysis, which showed a quite high efficiency. For comparison, the authors used a problem library PSPLIB \cite{PSPLIB96}. For some instances from the dataset J60 \cite{Goncharov12,Goncharov14,Goncharov17}, and dataset J120 \cite{Goncharov19}, the best-known solutions were improved. We were unable to find publications in which polynomial algorithms with guaranteed accuracy estimates proposed for such kind of project scheduling problem.

The problem of packing BCs in the particular case when all BCs consist of one bar is a BPP. In this article, to construct an approximate solution to the problem, we developed a greedy algorithm ($GA$) and obtained some qualitative results for non-increasing BCs with an arbitrary number of bars. The main result of the article is the proof that, for arbitrary two-bar charts, algorithm $A$, which uses the $GA$ as a procedure, builds a package whose length does not exceed $2\ OPT+1$, where $OPT$ is the minimum packing length. As far as we know, this is the first a priori estimate for the problem under consideration, which proves, in particular, that it belongs to the APX class.

The rest of the article is organized as follows. Section 2 provides a statement of the packing problem for BCs with an arbitrary number of bars, as well as a statement in the case of two-bar charts (2-BCs) in the form of Boolean Linear Programming (BLP). Section 3 describes the greedy algorithm $GA$ for the densest packing of BCs in a unit-height strip. Some properties of the algorithm are also given there. Section 4 discusses the packing problem of 2-BCs. We proposed an approximate algorithm that uses the $GA$ algorithm as a procedure. At the preliminary stage of the algorithm, some BCs are combined into one BC, in which at least one bar has a height greater than 1/2. This section gives the main result of the article, which consists in proving that the developed algorithm builds a package whose length does not exceed $2\ OPT + 1$, where $OPT$ is the minimum length of a strip into which all 2-BCs can be packed. In Section 5, we described the results of a numerical experiment, which made it possible to conduct a posteriori analysis of the developed algorithm. To build the optimal solution to the BLP, we used the CPLEX package and compared the optimal solution with the solution built by our algorithms. In section 6, we summarize and outline directions for further research.

\section{Formulation of the problem}
We have a horizontal stripe, the height of which, without loss of generality, equals 1 ($D=1$), and the set $S$ of BCs. Each BC $i\in S$ consists of the sequence of $l_i$ bars ($\max\limits_{i\in S}l_i=l$). Each bar $j$ in BC $i$ has a width that equals 1 and a height that equals $h^j_i\in (0,1]$ ($H_i=\max\limits_{j=1,\ldots,l_i}h^j_i$). We introduce the Cartesian coordinate system so that the lower boundary of the strip coincides with the abscissa. Let us consider a part of the strip to the right of the origin 0, which we divide into the \emph{cells} of width that equals 1, and number these cells with integers $1,2,\ldots$.

\begin{definition}
BC $i$ is \emph{non-increasing} (\emph{non-decreasing}) if $h^j_i\geq h^{j+1}_i$ ($h^j_i\leq h^{j+1}_i$) for all $j=1,\ldots,l_i-1$.
\end{definition}

\begin{definition}
\emph{Packing} is a function $p:S\rightarrow Z^+$, which associates with each BC $i$ an integer $p(i)$ corresponding to the cell number of the strip into which the first bar of BC $i$ falls.
\end{definition}

As a result of packing $p$, bars from BC $i$ occupy the cells $p(i),p(i)+1,\ldots,p(i)+l_i-1$.

\begin{definition}
The packing is \emph{feasible} if the sum of the heights of the bars that fall into one cell of the strip does not exceed 1. That is for each cell $k$ the inequality
$$
 \sum\limits_{i\in S:p(i)\leq k\leq p(i)+l_i-1} h^{k-p(i)+1}_i \leq 1
$$
holds.
\end{definition}

\begin{definition}
The packing \emph{length} $L(p)$ is the number of strip cells in which falls at least one bar.
\end{definition}

We can assume that any packing $p$ begins from the first cell, and in each cell from 1 to $L(p$), there is at least one bar. If this is not the case, then all or part of the package can be moved to the left.\\

\textbf{The Bar Charts Packing Problem (BCPP) is to build a feasible min-length package.}\\

Another measure of packing quality is \emph{density}. It is a ratio of the sum of the bar's heights to the packing length. The density cannot be greater than 1, and the higher the density, the better the packing.

Let us formulate the 2-BCPP problem, in which each BC has two bars, in the form of BLP. To do this, we introduce the following notation. Let the $i$th 2-BC have the height of the first bar $a_i$, and the second $b_i$. We introduce the variables:\\
$$
x_{ij}=\left\{
            \begin{array}{ll}
              1, & \hbox{if the first bar of BC $i$ is in the cell $j$;} \\
              0, & \hbox{else.}
            \end{array}
          \right.
$$
$$
y_j=\left\{
            \begin{array}{ll}
              1, & \hbox{if the cell $j$ contains at least one bar;} \\
              0, & \hbox{else.}
            \end{array}
          \right.
$$
Then 2-BCPP is written as follows.
\begin{equation}\label{e1}
  \sum\limits_j y_j \rightarrow\min\limits_{x_{ij},y_j\in\{0,1\}};
\end{equation}
\begin{equation}\label{e2}
  \sum\limits_j x_{ij} =1,\ i\in S;
\end{equation}
\begin{equation}\label{e3}
  \sum\limits_i a_ix_{ij} + \sum\limits_k b_kx_{k,j-1}\leq y_j,\ \forall j.
\end{equation}

Both problems BCPP and 2-BCPP are strongly NP-hard as the generalizations of the BPP \cite{Johnson73}. Moreover, these problems are $3/2$-inapproximable \cite{Vazirani01}.

\section{Greedy algorithm $G$}
First, we describe the version of the greedy algorithm (denote it by $G$), which builds an order-preserving package. That is, if the elements of the set $S$ are ordered, then the first bar of the $i$th BC cannot be placed to the right of the first bar of the $j$th BC if and only if $i<j$.

Let the elements of the set $S$ be arbitrarily numbered (ordered) by integers from 1 to $n=|S|$. We denote the resulting ordered set by $P$. In algorithm $G$, the first BC in $P$ is placed starting from the first cell and excluded from $P$. Then the following procedure is repeated for current list $P$. For the first item in the $P$, a cell is searched with the minimum number, not to the left of the cell containing the first bar of the previous BC, starting from which it can be placed preserving the feasibility of the packing. We exclude the first BC from the list $P$. The process continues until $P\neq\emptyset$.

Algorithm $G$ constructs a feasible package for a specific permutation of BCs with $O(nl)$ time complexity. In \cite{Erzin20}, we proved the lemma, which can be rephrased for BCPP as follows.

\begin{lemma}\cite{Erzin20}
 If each BC is non-increasing, then for a given order of BCs, algorithm $G$ constructs the optimal order-preserving solution to the BCPP.
\end{lemma}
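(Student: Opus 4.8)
The plan is to show, by a left-shifting exchange argument, that $G$ outputs an order-preserving packing of minimum length for the given order. Fix the order, write $p_G$ for the packing produced by $G$ (so $p_G(1)=1$, and $p_G(i)$ is the smallest cell $c\ge p_G(i-1)$ at which BC $i$ can be appended to the already-placed BCs $1,\dots,i-1$ while keeping feasibility), and let $q$ be an arbitrary order-preserving feasible packing. I would prove by induction on $i$ that $q$ can be modified --- staying feasible and order-preserving, and without increasing its length --- so that $q(j)=p_G(j)$ for every $j\le i$. After $n=|S|$ steps this turns $q$ into $p_G$ with $L(p_G)\le L(q)$ for the original $q$; since $q$ was arbitrary, $p_G$ is an optimal order-preserving packing.

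For the inductive step, assume $q(j)=p_G(j)$ for $j<i$. Since $q$ is order-preserving, $q(i)\ge q(i-1)=p_G(i-1)$, so $q(i)$ is one of the cells $G$ scans for BC $i$; and since BC $i$ at $q(i)$ is feasible in $q$ against every other BC, it is feasible in particular against BCs $1,\dots,i-1$ at $p_G(1),\dots,p_G(i-1)$. As $G$ selects the leftmost such feasible cell, $p_G(i)\le q(i)$. I would then move BC $i$ from $q(i)$ to $p_G(i)$. By the very definition of $p_G(i)$ this creates no conflict with BCs $1,\dots,i-1$; and BCs $i+1,\dots,n$ all start at cells $\ge q(i)$, so after the move they overlap BC $i$ only in cells that BC $i$ already occupied, where its height can only decrease. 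Thus feasibility of the move reduces to a single monotonicity fact.

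That monotonicity fact is the only genuine obstacle, and it is exactly where the hypothesis that each BC is non-increasing is used: when BC $i$ is shifted left by $d=q(i)-p_G(i)\ge 0$, any cell $k$ occupied both before and after the shift has its BC-$i$ contribution changed from $h^{\,k-q(i)+1}_i$ to $h^{\,k-p_G(i)+1}_i$, and since $k-p_G(i)+1\ge k-q(i)+1$ while $h^1_i\ge h^2_i\ge\cdots\ge h^{l_i}_i$, the new contribution does not exceed the old one; the finitely many cells newly covered by the shift contain only BCs $1,\dots,i-1$ together with BC $i$, which is precisely the condition defining $p_G(i)$ as feasible. Hence the modified $q$ stays feasible, stays order-preserving (since $p_G(i-1)\le p_G(i)\le q(i)\le q(i+1)$), and is no longer than before, a single leftward move having been performed. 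The base case $i=1$ is the same computation --- shifting BC $1$ to cell $1$ only lowers heights where it previously overlapped and otherwise leaves it alone, at height $\le 1$, in the vacated initial cells --- or one simply invokes the standing normalization that a packing may be assumed to start at cell $1$. This completes the induction, and hence the proof; the statement also generalizes the analogous fact for the Bin Packing Problem, in which every BC is a single bar.
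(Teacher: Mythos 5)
The paper itself does not reprove this lemma (it is quoted from \cite{Erzin20}), so your proposal has to stand on its own. Its core is sound: your induction does establish, for every $j$, that $p_G(j)\le q(j)$, that the hybrid packing (prefix at the $p_G$ positions, suffix at the original $q$ positions) stays feasible, and that it stays order-preserving; and your use of the non-increasing hypothesis when shifting BC $i$ left is exactly the right monotonicity argument. However, one step as written is false: after a single leftward move the length of the hybrid packing --- the number of non-empty cells, which is the paper's definition of length --- can strictly increase, because the move can fill cells that earlier compaction steps have already emptied, while the cells it vacates remain occupied by later BCs. Concretely, take 2-BCs with heights $(1,1),(0.4,0.4),(0.4,0.4),(0.3,0.3),(0.3,0.3)$ and $q=(1,3,5,7,7)$, so that $p_G=(1,3,3,5,5)$. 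After BC~3 is moved from cell $5$ to cell $3$, the occupied cells are $\{1,2,3,4,7,8\}$ (six cells); the next move, of BC~4 from cell $7$ to cell $5$, makes all of $1,\dots,8$ occupied (eight cells). So the per-step claim ``is no longer than before,'' on which your final chain $L(p_G)\le\dots\le L(q)$ rests, fails.

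The gap is repairable from what you have already proved, but it needs an observation you did not make: compare right endpoints rather than counting non-empty cells through the intermediate hybrids. First normalize $q$ to be gapless and to start at cell $1$ (legitimate for order-preserving packings: if a cell is empty, no BC straddles it, and all BCs lying to its right can be shifted left together, preserving feasibility and order), so that $L(q)=\max_j\bigl(q(j)+l_j-1\bigr)$. Since your induction gives $p_G(j)\le q(j)$ for all $j$, the rightmost cell occupied under $p_G$ is at most $\max_j\bigl(q(j)+l_j-1\bigr)=L(q)$, and $L(p_G)$ never exceeds its rightmost occupied cell (indeed $G$ produces a gapless packing starting at cell $1$, since each BC is placed at most one cell past the current rightmost occupied cell). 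Hence $L(p_G)\le L(q)$ without any claim about the lengths of intermediate packings; with that replacement your argument is complete.
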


The statement of the lemma generalizes the following statement for the BPP \cite{Lewis09}: There always exists at least one ordering of items that allows first-fit to produce an optimal solution.

Since the order does not matter for identical BCs, it is true the following

\begin{corollary}
 If all BCs are equal non-increasing or non-decreasing, then the algorithm $G$ constructs a package of minimum length.
\end{corollary}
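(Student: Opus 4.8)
The plan is to deduce the statement from Lemma~1, exploiting the fact (noted in the text) that for identical BCs the indices assigned to them are irrelevant. Write $c=(h^1,\dots,h^l)$ for the common BC and let $S$ consist of $n$ copies of $c$. For the \emph{non-increasing} case I would first show that the minimum length over all feasible packings of $S$ equals the minimum over \emph{order-preserving} packings: given any feasible packing $p$, relabel the BCs so that the one with the smallest value of $p$ gets index $1$, the next index $2$, and so on; since the BCs coincide this changes neither the multiset of occupied cells (hence feasibility) nor the length $L(p)$, and the result is order-preserving. Applying Lemma~1 to the non-increasing BC $c$ with this trivial order, $G$ produces an optimal order-preserving packing, which by the previous sentence is optimal overall.

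For the \emph{non-decreasing} case I would reduce to the previous one by reflecting the strip. For a packing of length $L$, relabel the cells $1,\dots,L$ in reverse order: a BC occupying cells $k,\dots,k+l-1$ is sent to cells $L-k-l+2,\dots,L-k+1$ with its bars reversed, every cell constraint is mapped to another cell constraint (so feasibility is preserved), and the length is unchanged. Hence reflection is a length-preserving involution between feasible packings of the $n$ copies of the non-decreasing $c$ and feasible packings of the $n$ copies of the non-increasing $\bar c=(h^l,\dots,h^1)$; in particular both instances have the same optimal length, and by the non-increasing case that value is attained when $G$ is run on the $\bar c$-instance.

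The delicate point — and the step I expect to be the main obstacle — is the passage from ``$G$ is optimal on the reflected instance $\bar c$'' to ``$G$ is optimal on the original instance $c$'': $G$ scans the strip from left to right and is order-preserving, so it is not manifestly invariant under reflection. To close this gap I would argue directly that, for $n$ identical monotone BCs, the greedy left-to-right placement (equivalently: fill cell $1$ with as many copies as possible, then the least cell that still admits a copy, and so on) already yields the optimal length irrespective of the orientation of $c$; the natural route is an induction on the sorted vector $1=p_1\le\dots\le p_n$ of first-bar positions output by $G$, showing that $p_k$ never exceeds the $k$-th position of any optimal order-preserving packing, and the crux is to use the monotonicity of $c$ to prove that sliding the already-placed copies leftwards keeps the next copy's position feasible. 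Verifying this monotonicity step carefully is where the real work of the proof lies.
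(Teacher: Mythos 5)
Your first two steps are precisely the paper's (essentially unwritten) justification: Lemma~1 gives the optimal order-preserving packing for non-increasing BCs in a given order; since the copies are identical, any feasible packing becomes order-preserving after relabelling, so the order-preserving optimum is the global optimum; and the non-decreasing case is reduced to the non-increasing one by reflecting the strip. The ``delicate point'' you single out, however, is not something the paper tries to close. Immediately after the corollary the authors state that for non-decreasing BCs the optimal packing is built by the analogue of $G$ that packs from right to left --- i.e.\ the corollary is to be read with that mirrored variant of $G$ for the non-decreasing case, which is exactly what your reflection bijection delivers. Under that reading your argument is complete and coincides with the paper's.

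If instead you insist on the literal reading --- the left-to-right $G$ applied to identical non-decreasing BCs --- then your third step is a plan rather than a proof: Lemma~1 genuinely uses non-increasingness, $G$ is not reflection-invariant, and the inductive claim that the $k$-th greedy first-bar position never lies to the right of the $k$-th position in some optimal packing is exactly the statement you leave unverified (the ``sliding copies leftwards preserves feasibility'' exchange step). As written, that part is a gap --- but it is a gap relative to a stronger claim than the paper actually establishes; the paper sidesteps it by invoking the right-to-left variant rather than proving anything about left-to-right $G$ on non-decreasing data.
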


In case of non-decreasing BCs, the optimal solution can be built by an algorithm similar to $G$ when the package is constructed from right to left. Since 2-BCs are either non-increasing or non-decreasing, the following also holds.

\begin{corollary}
 If all 2-BCs are equal, then the algorithm $G$ constructs an optimal solution to the 2-BCPP.
\end{corollary}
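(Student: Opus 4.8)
The plan is to obtain this as an immediate consequence of the preceding Corollary. The only thing to notice is that a 2-BC is automatically monotone: the $i$th 2-BC with bar heights $(a_i,b_i)$ is non-increasing when $a_i\ge b_i$ and non-decreasing when $a_i\le b_i$ (and both when $a_i=b_i$). Hence if all 2-BCs coincide, they are all equal and all of the same monotonicity type, which is precisely the hypothesis ``all BCs are equal non-increasing or non-decreasing'' of the previous Corollary; so $G$ returns a package of minimum length, i.e.\ an optimal solution of the 2-BCPP.

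If one prefers an argument that does not quote the non-decreasing half of that Corollary, I would split into cases. If $a\ge b$, the claim is immediate from Lemma 1: all BCs being identical, any feasible packing can be made order-preserving by relabelling the BCs without changing its length, so the order-preserving optimum delivered by $G$ is a global optimum. If $a\le b$, I would use a horizontal reflection. The map sending cell $k$ of a packing that occupies cells $1,\dots,L$ to cell $L+1-k$ is a length-preserving bijection between feasible packings of $n$ equal 2-BCs of type $(a,b)$ and feasible packings of $n$ equal 2-BCs of type $(b,a)$, since it interchanges the roles of the two bars. As $(b,a)$ is non-increasing, its packing optimum is attained by $G$ (previous case), so $OPT$ for the original instance equals this common value; it then remains to verify that $G$, executed left-to-right on the $(a,b)$ instance, also attains it, which follows because at every admissible starting cell $G$ packs as many copies as the constraint on the taller bar permits, wasting no length.

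The substantive step, and the one I expect to require the most care, is this last verification: $G$ is inherently left-to-right and not invariant under the reflection above, so one cannot simply carry the optimal greedy packing of the $(b,a)$ instance back to the $(a,b)$ instance. The natural way around this is a cell-by-cell comparison of how many copies $G$ places with first bar in cell $j$, exploiting that all BCs are equal so the greedy rule everywhere reduces to the same ``saturate the current pair of cells, then shift right'' pattern; the remaining ingredients --- monotonicity of 2-BCs, the reflection bijection, and the reduction to Lemma 1 --- are routine.
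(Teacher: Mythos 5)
Your first paragraph is exactly the paper's argument: the paper derives this corollary directly from the preceding one by observing that every 2-BC is automatically non-increasing or non-decreasing, so equal 2-BCs satisfy that corollary's hypothesis. The alternative reflection-based argument you sketch (and the care you flag about $G$ being left-to-right) is not needed for the paper's route, which simply cites the previous corollary's non-decreasing case as already covering right-to-left packing.
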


\section{2-BCPP}
In this section, we consider the problem in the case where each BC consists of two bars. To denote it, we use the 2-BCPP entry. Since 2-BCPP is a generalization of the BPP, let us try to use it. To do this, put each BC $i$ in a minimal 2-width rectangle that has a height equals $H_i = \max \{a_i, b_i\}$. As a result, we get a set of items $L$, each of which $i$ has a width equals 2 and characterized only by its height $H_i$. That is, we got BPP, for the solution of which we can use well-known approximate algorithms. Using the MFFD algorithm, it is possible to construct a package of items from $L$ using at most $71/60\ OPT(L)+1$ bins, where $OPT(L)$ is the minimum number of containers for packing items from $L$ \cite{Yue95}. This packing is feasible for 2-BCPP too. Since the height of the minimum bar in each rectangle containing BC can be arbitrarily small, the package constructed by the MFFD algorithm for 2-BCPP can have the density of 2 times less than the packing density of items from $L$. Therefore, the strip length, in which the MFFD algorithm packs all 2-BCs, is limited to $142/60\ OPT+2 \approx 2.367\ OPT+2$, where $OPT$ is the minimum packing length of 2-BCs.

\begin{remark}
Any BC $i\in S$ fits a rectangle of width $l_i$ and height $H_i$, and we can pack the resulting set of rectangles $R$ using the known algorithms. For example, an algorithm from \cite{Harren14} will construct (without rotation of the rectangles) a package whose length is not more than $(5/3 + \varepsilon)\ OPT(R)$, where $OPT(R)$ is the minimum possible packing length of the rectangles of the set $R$, and $\varepsilon> 0$. The resulting solution will be valid for BCPP too and will have a length at most $l (5/3+\varepsilon)OPT$, where $OPT$ is the optimal value of the BCPP objective function.

If all BCs have the same width $l$, then the MFFD algorithm for BPP will build a package for BCPP with a length of no more than
$l(71/60\ OPT+1)$.
\end{remark}

Below for 2-BCPP we propose a greedy algorithm, which is somewhat different from $G$. We denote it by $GA$. Let, as before, the list $P$ be an ordered set of elements from $S$. The first element in $P$ is placed in cells 1 and 2 and removed from $P$. Let some BCs are packed and deleted from $P$. Items deleted from $P$ do not move further. Then the typical procedure is performed, which consists of the following.  For each BC in $P$, we search the leftmost position that does not violate the feasibility of the packing. Among BCs that could be placed to the left of all, choose BC with a minimum number, fix its position in the package and delete it from $P$. The algorithm stops when $P = \emptyset$.

Algorithm $G$ builds an order-preserving package. As a result of the algorithm $GA$, BCs with higher numbers can stand in the package to the left of BCs with lower numbers.  Depending on the order of BCs in $P$, the algorithm $GA$ builds different solutions. Further, we propose algorithm $A$, using the $GA$ as a procedure, which builds a package of at most $2\ OPT + 1$ length, where $OPT$ is the smallest possible packing length. Algorithm $A$ consists of three stages.

The first stage is preparatory, and it consists of combining BCs so that, if it is possible, in each 2-BC, the height of at least one bar is greater than $1/2$. For this, the pair of BCs $i$ and $j$, for which $a_i, b_i, a_j, b_j \leq 1/2$, are combined into one new BC of width 2 with the height of bars $a_i + a_j$ and $b_i + b_j$. As a result, the set $S$ is transformed: two BCs $i$ and $j$ are removed from it, and one new BC is added. The procedure of combining BCs is repeated until $S$ contains the pairs of BCs with both bars no more than $1/2$. As a result, for each BC, except possibly one, the maximum bar's height will be greater than $1/2$ (Fig. 2$b$).

\begin{figure}[t]
\includegraphics[width=\textwidth]{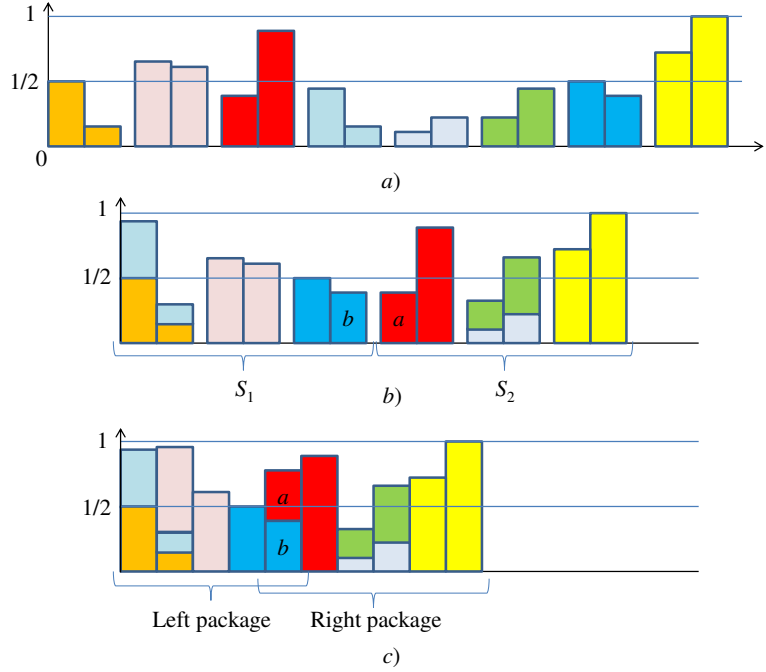}
\caption{Illustration of the operation of algorithm $A$. a) a set $S$; b) a set of combined BCs; c) a package constructed by algorithm $A$.} \label{fig2}
\end{figure}

At the second stage of algorithm $A$, we split the updated set $S$ into two subsets $S_1$ and $S_2$. In $S_1$ we include non-increasing, and in $S_2$ non-decreasing BCs (Fig. 2b). Without loss of generality, we assume that a small BC, for which both bars are at most $1/2$, if it exists, is the last element of the set $S_1$. We pack the set of elements in $S_1$ using the algorithm $GA$ (from left to right), and we pack the set of elements in $S_2$ by the analog of the algorithm $GA$ from right to left. Then we get two packages: left and right.

At the third stage of algorithm $A$, we shift the right package of the elements of the set $S_2$ to the left to the maximum so as not to violate the feasibility of the packing (Fig. 2c).

\begin{lemma}
 The time complexity of algorithm $A$ is $O(n^2)$.
\end{lemma}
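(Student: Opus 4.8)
The plan is to bound the running time stage by stage, using two simple facts: after Stage~1 the number of bar charts never exceeds $n$, and any package produced by a greedy procedure on $m\le n$ two-bar charts occupies at most $2m=O(n)$ cells (each 2-BC covers two cells and the greedy rule never leaves a gap at the end of the search range).

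\medskip
\noindent\textbf{Stage 1 (combining).} Each merge of two ``small'' bar charts (all four bars $\le 1/2$) decreases $|S|$ by one, so there are at most $n$ merges. Maintaining a list of the bar charts that are still small, each merge is found and carried out in $O(1)$ amortized time: the newly formed bar chart is tested once for smallness and appended to the list only if it qualifies. Hence Stage~1 costs $O(n)$ (and even a crude re-scan of the list after every merge gives only $O(n^2)$).

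\medskip
\noindent\textbf{Stage 2 (the two greedy packings).} The bottleneck is one run of $GA$ on a list of $m\le n$ two-bar charts, which I claim costs $O(m^2)=O(n^2)$. The key is a monotonicity observation: throughout the execution of $GA$ the load of every cell is non-decreasing, so once a starting cell $k$ is infeasible for a bar chart $i$ it stays infeasible for $i$ until the end. Therefore keep, for each remaining bar chart $i$, a pointer $\mathrm{lb}(i)$ with the invariant that every cell with index $<\mathrm{lb}(i)$ has already been certified infeasible for $i$; these pointers only move rightward. In one iteration we advance each $\mathrm{lb}(i)$ to the first feasible starting cell (or past the current package end) at cost $O(1+\Delta\mathrm{lb}(i))$, then select among the bar charts attaining the smallest leftmost feasible position the one of minimum index, fix it, and update two cell loads in $O(1)$. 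Summation: the $O(1)$ term per bar chart per iteration contributes $O(m)$ over a single iteration, hence $O(m^2)$ over the $O(m)$ iterations; and the total travel $\sum_{\text{iterations}}\Delta\mathrm{lb}(i)$ of a single pointer is bounded by the final package length $O(m)$, so over all $m$ bar charts the advancements contribute another $O(m^2)$. Thus one run of $GA$ is $O(n^2)$; the mirror-image right-to-left run on $S_2$ is $O(n^2)$ by the symmetric argument, and splitting $S$ into $S_1$ and $S_2$ is $O(n)$.

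\medskip
\noindent\textbf{Stage 3 (shifting the right package), and total.} Sliding the whole $S_2$-package left by $\delta$ cells preserves feasibility for all sufficiently small $\delta$ and loses it for all larger $\delta$ (again because the overlapping cell loads only grow with $\delta$), so the maximal admissible shift is found by scanning $\delta=0,1,2,\dots$, each feasibility test costing $O(n)$ and at most $O(n)$ values being tested; this is $O(n^2)$, and a binary search would even give $O(n\log n)$. Adding the three stages gives the $O(n^2)$ bound. The step I expect to require the most care is the amortized analysis of $GA$: the naive estimate is $O(n^3)$ — rescanning all $O(n)$ cells for all $O(n)$ bar charts in each of the $O(n)$ iterations — so the whole argument hinges on the load-monotonicity remark, which guarantees that the per-bar-chart search pointers never retreat and that each of them travels a distance bounded by the $O(n)$ package length.
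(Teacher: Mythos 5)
Your bound is correct, but your argument for the greedy stage is genuinely different from (and more careful than) the paper's. The paper disposes of Stage~2 in one line: at each of the at most $n$ steps of $GA$ the best position of the current BC is found in $O(n)$ time, which implicitly uses the structure created by Stage~1 (in $S_1$ every first bar except possibly the last exceeds $1/2$, so the leftmost feasible start of any remaining BC is either over the last occupied cell or immediately after it, an $O(1)$ test per BC); Stage~3 is then declared $O(1)$, since after Stage~2 the two packages can interact only at the adjacent pair of cells. You instead analyze $GA$ in full generality via a charging argument: cell loads never decrease, so each bar chart's leftmost-feasible pointer only moves rightward and travels at most the $O(n)$ package length, giving $O(n^2)$ per run without invoking the ``taller than $1/2$'' structure at all. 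This buys robustness (your bound holds for $GA$ on arbitrary inputs and orderings, e.g.\ for the variants $GA$, $A1$ used in the experiments), at the cost of more machinery than the paper needs for algorithm $A$ itself. One caveat: your Stage~3 claim that feasibility of the shift is monotone in $\delta$ is false in general (with left-package cell loads $\ldots,0.5,0.9$ and right-package loads $0.5,0.1,\ldots$ an overlap of one cell is infeasible while an overlap of two is feasible, because the cells are paired differently for different $\delta$), so the binary-search remark is unjustified; however, simply testing every candidate shift and taking the largest feasible one costs $O(n^2)$, so your overall $O(n^2)$ conclusion stands, even though the paper charges only $O(1)$ to this stage.
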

\begin{proof}
The first stage of algorithm $A$ can be implemented with a time complexity of $O(n)$ as follows. Put $M=\emptyset$. Browse the BCs in numerical order. If the next BC has the first bar greater than the second and more than $1/2$, then we put it in the set $S_1$. If the next BC has a second bar greater than the first and more than $1/2$, then we put it in the set $S_2$. If both bars do not exceed $1/2$ and $M=\emptyset$, then put the current BC in $M$ and continue viewing. If both bars do not exceed $1/2$ and $M\neq\emptyset$, then combine the current BC with BC in $M$. If the resulting new combined BC has at least one bar greater than $1/2$, then exclude it from $M$ and put it in $S_1$ if the first bar is larger than the second; otherwise we put it in $S_2$. If, after combining the current BC with BC in $M$, a new BC with both bars not exceeding $1/2$ is obtained, then we leave the new merged BC in $M$ and proceed to consider the next BC in $P$. Thus, as a result of one scan of all BCs, we will form the sets $S_1$ and $S_2$.

At the second stage, we pack the elements of the sets $S_1$ and $S_2$ separately using the algorithm $GA$. At each step of the algorithm $GA$, one BC is added to the already constructed package. The complexity of the process of finding the best position for the current BC equals $O(n)$. The number of steps of the algorithm $GA$ is at most $n$. Therefore, the time complexity of the second stage of algorithm $A$ is $O(n^2)$.

The complexity of the third stage is $O(1)$. The lemma is proved.
\end{proof}

\begin{theorem}
Algorithm $A$ with time complexity $O(n^2)$ constructs a package for 2-BCPP whose length is at most $2\ OPT+1$, where $OPT$ is the minimum length of a strip into which all 2-BCs can be packed.
\end{theorem}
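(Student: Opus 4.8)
The plan is to reduce the statement to a single area inequality and then bound the two greedy sub-packages separately.

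First I would record the cheap lower bound $OPT\ge V:=\sum_{i\in S}(a_i+b_i)$: in any feasible package the sum of all bar heights equals $V$ and cannot exceed the number of cells, so $OPT\ge\lceil V\rceil\ge V$. Stage~1 changes nothing here — merging two BCs with all four bars at most $1/2$ yields a valid BC (its bar heights stay $\le 1$), keeps $V$ unchanged, and any feasible package of the merged instance un-merges into a feasible package of the original one. So it suffices to prove $L(A)\le 2V+1$ for the post-Stage-1 instance, in which every BC — except at most one ``small'' BC $s$ with both bars $\le 1/2$ — has a bar of height $>1/2$: its first bar if it lies in $S_1$ (these BCs are non-increasing), its second bar if it lies in $S_2$. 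Since Stage~3 only shortens the package, $L(A)\le L_1+L_2$, where $L_1$ (resp.\ $L_2$) is the length of the left package of $S_1$ (resp.\ the right package of $S_2$); hence it is enough to show $L_1\le 2V_1+1$ and $L_2\le 2V_2+1$, with $V_t=\sum_{i\in S_t}(a_i+b_i)$. The two are mirror images, so I focus on $L_1$.

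Next I would pin down the structure of the package $GA$ builds on $S_1$. Let $\sigma(1),\dots,\sigma(n)$ be the BCs in the order $GA$ fixes them, and $c_j$ the abscissa of the first bar of $\sigma(j)$. Heights only grow as BCs are fixed, so a cell that once forbids a BC forbids it forever; hence the leftmost feasible positions are non-decreasing along the run, and since two first bars of height $>1/2$ cannot share a cell, $c_1<c_2<\dots<c_n$ with $c_1=1$. An empty cell is always feasible, so $c_{j+1}\le c_j+2$, and $L_1=c_n+1=n+|G|+1$ where $G=\{j:c_{j+1}=c_j+2\}$. Finally, at a gap $j\in G$ the cell $c_j+1$ held exactly $b_{\sigma(j)}$ when $\sigma(j+1)$ was fixed (among the already-fixed BCs only $\sigma(j)$'s second bar reaches that cell), so the gap forces $b_{\sigma(j)}+a_{\sigma(j+1)}>1$. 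The key estimate is then a re-indexing of the area: $2V_1=2a_{\sigma(1)}+2b_{\sigma(n)}+\sum_{j=1}^{n-1}\bigl(2a_{\sigma(j+1)}+2b_{\sigma(j)}\bigr)$. Bounding the $j$-th summand below by $2$ when $j\in G$ (the gap inequality) and by $1$ otherwise (since $a_{\sigma(j+1)}>1/2$), and using $2a_{\sigma(1)}>1$ and $2b_{\sigma(n)}\ge 0$, gives $2V_1>1+(n-1-|G|)+2|G|=n+|G|=L_1-1$, i.e.\ $L_1<2V_1+1$. The same argument on the reversed BCs of $S_2$ gives $L_2<2V_2+1$, so $L(A)\le L_1+L_2<2(V_1+V_2)+2=2V+2$; as $L(A)$ and $OPT$ are integers and $V\le OPT$, this forces $L(A)\le 2\,OPT+1$.

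I expect the only genuinely delicate point to be the single small BC $s$. Once $s$ is placed into $S_1$, its (small) first bar may share a cell with a big first bar, so the abscissae need no longer be strictly increasing and the telescoping above can fail at the one index with $\sigma(j+1)=s$. I would treat $s$ as an explicit exceptional case, showing that the at most one ``unpaid'' cell it can create is covered either by its own area contribution $2(a_s+b_s)$ or by slack already present in the strict inequality $L_1<2V_1+1$; making the monotonicity and gap claims of the structural step fully rigorous is the other routine-but-careful ingredient. Note that the argument does not actually need Stage~3: the bound $L(A)\le L_1+L_2$ already suffices, and Stage~3 only improves the practical behaviour.
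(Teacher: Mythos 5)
Your proof is, at its core, the paper's own argument written out more carefully: the paper packs $S_1$ with $GA$ (and $S_2$ mirrored) and argues that each newly placed BC either stacks its first bar of height $>1/2$ onto the previous second bar or, when $a_{k+1}+b_k>1$, opens two new cells, so that the aggregate density of all cells except possibly one exceeds $1/2$; your telescoping identity $2V_1=2a_{\sigma(1)}+2b_{\sigma(n)}+\sum_j\bigl(2a_{\sigma(j+1)}+2b_{\sigma(j)}\bigr)$ together with the gap inequality $b_{\sigma(j)}+a_{\sigma(j+1)}>1$ is exactly that charging scheme done cell by cell, and your structural step (positions nondecreasing under $GA$'s greedy selection, increments at most $2$, gaps forcing the inequality) makes rigorous what the paper leaves implicit. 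The one genuine divergence is at the junction of the two half-packages: the paper analyses Stage~3 (the cases $a+b>1$ and $a+b\le 1$) to conclude that at most one cell of the merged package is light, whereas you discard Stage~3, add the two bounds $L_t<2V_t+1$, and recover the additive constant by integrality of $L(A)$ and $OPT$ together with $OPT\ge V$. That finish is cleaner and equally valid (Stage~3 is indeed only a practical improvement).

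The place where your write-up is not yet a proof is the one you flag yourself: the single combined BC $s$ with both bars at most $1/2$. Note that the paper is no more rigorous there --- it disposes of $s$ by declaring it ``last in $S_1$'', which does not match how $GA$ actually selects BCs --- but your stated fallback, that the unpaid cell is ``covered by slack already present in the strict inequality'', is not automatic: in the tight regime ($a$ close to $1/2$, $b$ close to $0$, or $a+b$ close to $1$ at gaps) the generic slack terms $2a-1$ and $2(a+b)-2$ can be arbitrarily small, so they cannot by themselves absorb a deficit close to $1$. The compensation has to be extracted from the blocking structure instead: if $GA$ places $s$ at position $c+1$ and the charge $2a_s+2b_{\sigma(j)}$ for that step is below $1$, then every position $p\le c$ was infeasible for $s$ at that moment, which forces, for each such $p$, cell $p$ or cell $p+1$ to be filled above $1-\max(a_s,b_s)\ge 1/2$; these overfull cells carry exactly the surplus area that is missing in the tight scenario, and a short case analysis (plus the convention that $s$ loses ties, so $\sigma(1)\ne s$ and $2a_{\sigma(1)}>1$ is preserved) closes the estimate. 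With that exceptional case actually carried out, your argument is complete and matches the paper's bound.
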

\begin{proof}
We first consider the packing of the set $S_1$. In this set, all BCs are not increasing, and the height of the first bar for all BCs, except, perhaps, the last BC, is greater than $1/2$. Algorithm $GA$, when packing next BC, shifts it as far as possible to the left. Let $k$ BCs are already packed, then the height of the last bar is $b_k$. Consider the $(k+1)$th BC. The following two cases are possible:
\begin{enumerate}
  \item $a_{k+1} + b_k \leq 1$;
  \item $a_{k+1} + b_k > 1$.
\end{enumerate}
In the first case, we place the first bar of the $(k+1)$th BC over the second bar of the $k$th BC. In the second case, we put the $(k+1)$th BC in the next two free cells. In both cases, after adding the $(k+1)$th BC, the total packing density of the first non-empty cells, except, possibly, the last, is greater than $1/2$.

For the set $S_2$, we carry out packing in a similar way from right to left. As a result, the packing density of non-empty cells without the first cell of the right package will be more than $1/2$.

Let us denote by $b$ the height of the last bar in the left package, and by $a$ the height of the first bar in the right package. After shifting the right package to the left to the maximum so that the total packing is feasible, we get one of the following cases.
\begin{itemize}
  \item $a+b>1$. Then the left and right packages are touching each other, and the density of the whole package is at least 1/2.
  \item $a+b\leq1$. Then the right bar of the left package and the left bar of the right package occupy the same cell (Fig. 2$c$). In this case, $a+b$ can be at least 1/2 or at most 1/2, and then the density of the whole package is at least 1/2, excluding maybe one cell.
\end{itemize}
In any way, we have that the package density of all cells, except, possibly, one, is more than $1/2$. From this, we obtain the statement of the theorem.
\end{proof}

\begin{remark}
Algorithm $A$ constructs a package whose density is below bounded by $1/2$, and this estimate is tight, which follows from the following instance. Let all BCs be equal with the height of the first bar $a_i = 1$, and the second $b_i = \varepsilon$, $i\in S$. Then each BC in the optimal package will occupy two cells, and the density of such packing $(1+\varepsilon)/2$ tends to $1/2$ when $\varepsilon$ tends to $0$. However, if instead of density, we compare the length of package constructed by the algorithm $GA$ with the minimum packing length, the difference will be less than two times. For the considered instance, for example, the ratio is $1$, i.e., $GA$ constructs the optimal solution. Therefore, to obtain a more accurate estimate for the ratio, one needs to find a more accurate lower bound for the length of the optimal packing.
\end{remark}

\begin{remark}
After the first step of algorithm $A$, several BCs are combined into one BC, which reduces the number of feasible packages. In addition, the length of the package constructed by the algorithm $GA$ substantially depends on the order of elements in the list $P$. In the next section, we present the results of a numerical experiment in which the lengths of the package constructed by the algorithm $GA$ for different BCs ordering are compared with the minimal packing length.
\end{remark}

\section{Simulation}
For simulation, all the proposed algorithms were implemented in the Python programming language. In the numerical experiment, the input data waer generated randomly. As parameters $a_i$, $b_i$ ($i=1,\ldots,n$) of the problem (1)-(3), random independent values were uniformly distributed in the segment $(0,1]$. We treated the instances of different sizes $n\in [10,1000]$. For each value of $n$, 100 different instances were generated. To build the optimal solution to BLP or to find the lower bound for the objective function, we used the IBM ILOG CPLEX 12.10 software package. The calculations were carried out on a computer Intel Core i7-3770 3.40GHz 16Gb RAM.

We examined six different approximate algorithms: the algorithms $A$, $GA$, and their modifications. To evaluate the influence of the first stage of algorithm $A$, we use algorithm $A1$ without the first stage. The quality of the solution built by algorithm depends significantly on the order of elements in the set $P$. To evaluate the influence of the ordering, we implemented the algorithms $A$\_LO, $GA$\_LO, and $A1$\_LO. The abbreviation ``LO'' means that before packing, we order the BCs lexicographically in non-increasing order of bar's height and then apply the algorithms $A$, $GA$, and $A1$, correspondently.

Table 1 presents the results of a numerical experiment. One can see the benefits of the preliminary lexicographic ordering (LO) of the BCs. CPLEX operating time was limited to 20 seconds when $n<500$, 40 seconds when $n=500$, 120 seconds when $n=750$ and 300 seconds when $n=1000$. For each size and each algorithm, the table shows the mean values $R_{av}$ and standard deviations $R_{sd}$ of $R$ which is defined as follows. If we know the optimal solution, then $R$ is the ratio. If CPLEX failed to find an optimal solution, then $R$ is the objective function of an approximate solution divided by the lower bound for objective function yielded by CPLEX during the allotted time. For $n\leq 75$, CPLEX often builds only an approximate solution to the problem, which is tight enough (the average value of $R$ is about 1.11). However, when the size increases up to 1000, CPLEX builds the approximate solution significantly worse than the package build by the proposed approximate algorithms. We show the graphics of $R$ depending on $n$ for the algorithms CPLEX, $A$, and $GA$\_LO in Fig. 3. In this figure, we also showed the standard deviation from the mean values of $R$ for different algorithms. It is also important to note that for all $n$, the running time of approximate algorithms did not exceed 1 second. For example, when $n=1000$, the algorithm $A$ built solutions in 0.25 seconds.

\begin{table}[h!]
\scriptsize{
\centering
\setlength{\tabcolsep}{1pt}
{
\begin{tabular}{||c||c|c||c|c||c|c||c|c||c|c||c|c||c|c||}
\hline

    \multirow{2}{*}{$n$} & \multicolumn{2}{|c|}{CPLEX} & \multicolumn{2}{|c|}{$A$} & \multicolumn{2}{|c|}{$A$\_LO} & \multicolumn{2}{|c|}{$A1$}
     & \multicolumn{2}{|c|}{$A1$\_LO} & \multicolumn{2}{|c|}{$GA$} & \multicolumn{2}{|c|}{$GA$\_LO}\\

\cline{2-15}
 & $R_{av}$ & $R_{sd}$& $R_{av}$ & $R_{sd}$  & $R_{av}$ & $R_{sd}$  & $R_{av}$ & $R_{sd}$  & $R_{av}$ & $R_{sd}$  & $R_{av}$ & $R_{sd}$  & $R_{av}$ & $R_{sd}$ \\

\hline

10	&1	&0.02	&1.21	&0.09	&1.16	&0.08	&1.2	&0.1	&1.12	&0.08	&1.14	&0.08	&1.07	&0.07 \\
25	&1.06	&0.04	&1.29	&0.09	&1.22	&0.08	&1.25	&0.07	&1.16	&0.06	&1.18	&0.06	&1.1	&0.05 \\
50	&1.08	&0.03	&1.28	&0.05	&1.21	&0.04	&1.22	&0.05	&1.13	&0.04	&1.16	&0.03	&1.09	&0.03\\
75	&1.11	&0.01	&1.27	&0.03	&1.2	&0.03	&1.21	&0.03	&1.12	&0.03	&1.15	&0.02	&1.08	&0.02\\
100	&1.11	&0.01	&1.26	&0.02	&1.19	&0.02	&1.19	&0.02	&1.11	&0.02	&1.14	&0.02	&1.08	&0.02\\
250	&1.15	&0.01	&1.22	&0.02	&1.17	&0.02	&1.15	&0.01	&1.07	&0.02	&1.11	&0.01	&1.05	&0.01\\
500	&1.18	&0.085	&1.19	&0.012	&1.16	&0.011	&1.12	&0.011	&1.05	&0.012	&1.09	&0.009	&1.04	&0.009\\
750	&1.24	&0.1	&1.18	&0.007	&1.15	&0.009	&1.11	&0.008	&1.04	&0.009	&1.08	&0.008	&1.03	&0.009\\
1000	&1.24	&0.076	&1.18	&0.008	&1.15	&0.009	&1.1	&0.008	&1.04	&0.009	&1.08	&0.007	&1.02	&0.007\\
\hline

\hline
\end{tabular}
\caption{Simulation results: mean values $R_{av}$ and standard deviations $R_{sd}$ of $R$.}
\label{table:1}
}}
\end{table}

\begin{figure}[t]
\includegraphics[width=\textwidth]{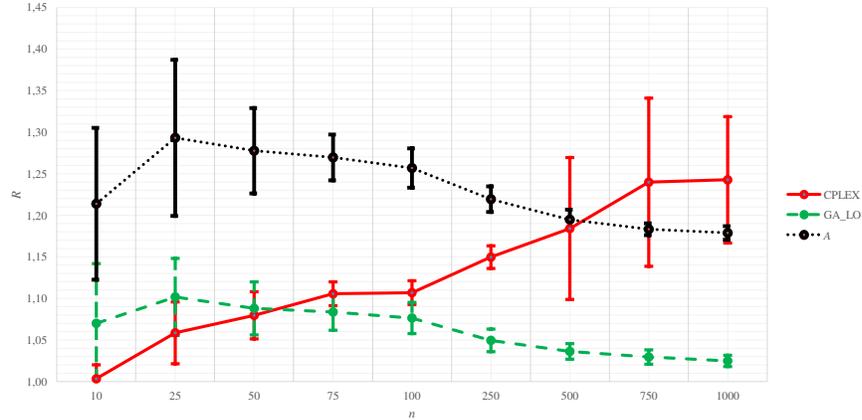}
\caption{Dependence of $R$ on the dimension $n$.} \label{fig3}
\end{figure}

Thus we can conclude that the algorithm $GA$\_LO (with lexicographic order of the BCs) is the best among all considered algorithms and, starting from $n=75$, it builds solutions more accurately than CPLEX in 5 minutes. The algorithm $A1$ turned out to be better in most cases than the algorithm $A$. We would also like to note that this experiment confirmed the significant influence of the ordering of BCs. All algorithms with preliminary lexicographic ordering turned out to be significantly more accurate than algorithms without preliminary ordering.

\section{Conclusion}
We examined the problem BCPP of packing BCs in a strip of minimum length. For the particular case, when all BCs have two bars each, the polynomial algorithm $A$ developed by us builds a package whose length does not exceed $2\ OPT+1$, where $OPT$ is the minimum possible package length. As far as we know, this is the first guaranteed estimate for 2-BCPP, which proves, in particular, that it belongs to the APX class. We also conducted a numerical experiment in which we compared the solutions built by our approximate algorithms with the optimal solutions built by the CPLEX package. Based on the results of a numerical experiment, we conclude that the algorithm $GA$\_LO, which uses the greedy algorithm $GA$ to pack BCs lexicographically ordered in non-increasing order, significantly outperforms all the others. In particular, for the number of BCs $n=1000$, it constructs a solution in less than 1 second, and the value of the objective function on this solution differs from the optimal value of the objective function by no more than 1.05 times (an average of 1.02 times). CPLEX in 5 minutes builds a solution on which the value of the objective function is, on average, 1.24 times worse than the optimal value of the objective function. On a larger dimension, CPLEX in 5 minutes does not produce a single feasible solution. If we increase the CPLEX operating time, the general trend will not change (see. Fig. 3).

In the future, we plan, firstly, to reduce the guaranteed estimate for the ratio, and, secondly, to consider BCs with a large number of bars.

\end{document}